\newcommand{\BS}[1]{{\color{black} #1}}
\begin{document}
%\title{Coded Caching in a Stochastic Network with Optimized Shared-Cache Sizes}
\title{Stochastic Coded Caching with Optimized Shared-Cache Sizes and Reduced Subpacketization}
%\thanks{The work is supported by the European Research Council under the EU Horizon 2020 research and innovation program / ERC grant agreement no. 725929 (project DUALITY).}
%
\author{\IEEEauthorblockN{Adeel~Malik, Berksan~Serbetci, Petros~Elia}
\IEEEauthorblockA{Dept. of Communication Systems, EURECOM, Sophia Antipolis, 06410, France \\
\{malik, serbetci, elia\}@eurecom.fr}
\thanks{The work is supported by the European Research Council under the EU Horizon 2020 research and innovation program / ERC grant agreement no. 725929 (project DUALITY).}
}
%Email: \left\{malik, serbetci, parrinel, elia\right\}@eurecom.fr
%Email: malik@eurecom.fr, serbetci@eurecom.fr, parrinel@eurecom.fr, elia@eurecom.fr
%\IEEEauthorblockN{Adeel~Malik}
%\IEEEauthorblockA{\textit{Dept. of Communication Systems} \\
%\textit{EURECOM}\\
%Sophia Antipolis, 06410, France \\
%malik@eurecom.fr}
%\and
%\IEEEauthorblockN{Berksan~Serbetci}
%\IEEEauthorblockA{\textit{Dept. of Communication Systems} \\
%\textit{EURECOM}\\
%Sophia Antipolis, 06410, France \\
%serbetci@eurecom.fr}
%\and
%\IEEEauthorblockN{Emanuele~Parrinello}
%\IEEEauthorblockA{\textit{Dept. of Communication Systems} \\
%\textit{EURECOM}\\
%Sophia Antipolis, 06410, France \\
%parrinel@eurecom.fr}
%\and
%\IEEEauthorblockN{Petros~Elia}
%\IEEEauthorblockA{\textit{Dept. of Communication Systems} \\
%\textit{EURECOM}\\
%Sophia Antipolis, 06410, France \\
%elia@eurecom.fr}

\maketitle

\newtheorem{axiom}{Axiom}
\newtheorem{lemma}{Lemma}
\newtheorem{corollary}{Corollary}
\newtheorem{theorem}{Theorem}
\newtheorem{prop}{Proposition}
\newtheorem{observation}{Observation}
\newtheorem{definition}{Definition}
\newtheorem{remark}{Remark}%[section]
\newtheoremstyle{case}{}{}{}{}{}{:}{ }{}
\theoremstyle{case}
\newtheorem{case}{Case}

\begin{abstract}
This work studies the $K$-user broadcast channel with $\Lambda$ caches, when the association between users and caches is random, i.e., for the scenario where each user can appear within the coverage area of -- and subsequently is assisted by -- a specific cache based on a given probability distribution. Caches are subject to a cumulative memory constraint that is equal to $t$ times the size of the library. We provide a scheme that consists of three phases: the storage allocation phase, the content placement phase, and the delivery phase, and show that an optimized storage allocation across the caches together with a modified uncoded cache placement and delivery strategy alleviates the adverse effect of cache-load imbalance by significantly reducing the multiplicative performance deterioration due to randomness. In a nutshell, our work provides a scheme that manages to substantially mitigate the impact of cache-load imbalance in stochastic networks, as well as -- compared to the best known state-of-the-art -- the well-known subpacketization bottleneck by showing its applicability in deterministic settings for which it achieves the same delivery time -- which was proven to be close to optimal for bounded values of $t$ -- with an exponential reduction in the subpacketization.

%We numerically verify that the proposed scheme also achieves the same state-of-the-art delivery time --- which was proven to be close to optimal for bounded values of $t$ --- with reduced subpacketization.
\end{abstract}

\begin{IEEEkeywords}
 Coded caching, shared caches, heterogeneous networks, femtocaching.
 %load balancing
\end{IEEEkeywords}

\IEEEpeerreviewmaketitle

\setlength{\abovedisplayskip}{2pt}   %reduce the spacing between equation and the text line above
\setlength{\belowdisplayskip}{2pt}    %reduce the spacing between equation and the text line below
%\captionsetup{belowskip=-4pt}
%\captionsetup{aboveskip=-1pt}

\section{Introduction}\label{Sec:1}
The ever-growing amounts of mobile data traffic have highlighted the need for innovative solutions that can provide service to an \BS{ever-increasing} number of users while using restricted network bandwidth resources. Within this framework, cache-enabled wireless networks have emerged as a viable option that can transfigure the storage capabilities of the network nodes into a fresh and powerful resource.

The seminal work in~\cite{man_it14} introduced the concept of coded caching, and revealed that an unbounded number of cache-aided users having different content requests can be served simultaneously by the aid of multicasting transmissions even with a bounded amount of network resources. \textcolor{black}{Key to this approach is a novel and carefully designed} cache placement algorithm. This delivery speedup is referred to as the \emph{coding gain} -- or equivalently as the \textcolor{black}{Degrees-of-Freedom (DoF) -- and it scales} with the total storage capacity of the \textcolor{black}{network.} \textcolor{black}{This same approach was shown} to be \textcolor{black}{information-theoretically optimal} in~\cite{wanOptimalityTransIT2020, yuTradeoff2TransIT2019} for the shared-link broadcast channel.

Many follow-up works have been studied since then, including the study of coded caching in D2D networks~\cite{ji_it16}, \textcolor{black}{in} subpacketization-constrained settings~\cite{yan_it17,lampiris_jsac18,shanmugam_it16}, \textcolor{black}{in settings with arbitrary} popularity distributions~\cite{man_it17,zhang_it18,serbetci_wiopt20}, and in other \textcolor{black}{settings as well\cite{malik_globecom20, brunero_arxiv21_comb, brunero_arxiv21_self, hachem_it17, parrinello_it20, malik_tcom21, wan_fogran_21}}.
%~\cite{man_it17,zhang_it18,ji_iswcs14,serbetci_wiopt20}
\begin{figure}[t]
\centering
 \includegraphics[width=0.7\linewidth]{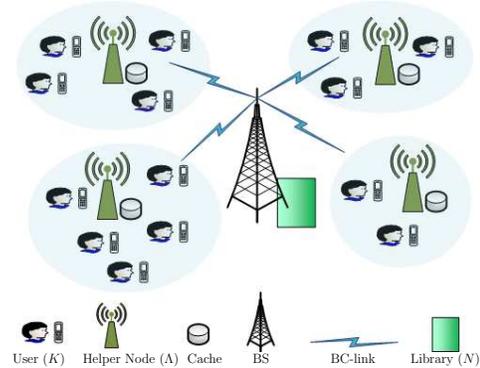}
\caption{An instance of a shared-cache network.}
\label{fig:SM}
\vspace{-.3cm}
\end{figure}

\subsection{Coded caching networks with shared caches}~\label{section:11}
Even though coded caching offers massive gains for the shared-link broadcast channel, its applicability to more realistic settings requires further exploration. In the so-called shared-cache setting, different users are within the coverage area of different cache-aided base stations (or caches), and are forced to benefit from the same cache content stored in the cache that they are associated with. \BS{This shared-cache setting is of great importance, and it arguably represents a much more realistic scenario as having all users with their own caches dedicated to a certain content library is a fanciful assumption in practical wireless communication settings}\textcolor{black}{\cite{parrinello_it20, malik_tcom21}}. \textcolor{black}{Such realistic settings may include} cache-enabled heterogeneous networks (HetNets), where a central transmitter (a base station, or a macrocell) needs to deliver content to a set of interfering users with the assistance of cache-enabled helper nodes (small base stations, or femtocells), where the cache content is available to all users (cost-free) within the coverage area of that node. An instance of a shared-cache network is shown in Figure~\ref{fig:SM}. 

An early work on this scenario can be found in~\cite{hachem_it17}, where each helper node is \textcolor{black}{serving an} equal number of users. The work in~\cite{parrinello_it20} removed this assumption, and characterized the optimal (under the uncoded cache placement) worst-case delivery time for the case when an arbitrary number of users is assisted by each cache, and when the cache placement is agnostic to the user-to-cache association. \BS{Subsequently,} the work in~\cite{malik_tcom21} extended the deterministic user-to-cache association setting in~\cite{parrinello_it20} to the stochastic network setting, \textcolor{black}{and for the setting where} the cache populations \textcolor{black}{follow} a given probability distribution. \textcolor{black}{This work revealed the surprising fact that} the cache-load imbalance could lead to the significant deterioration in coding gains. \textcolor{black}{This same work also showed that} a more balanced user-to-cache association performs better than the unbalanced one, leading to the fact that the multiplicative performance deterioration caused by the stochastic nature of the problem can be drastically reduced by carefully tuning the cache-load balance.

The work in~\cite{wan_fogran_21} proposed a novel coded placement strategy by assuming that user-to-cache association during the placement phase is known, and showed that exploiting this extra knowledge yields additional coding gains. In a similar \textcolor{black}{context}, the work in~\cite{parrinello_itw19} optimized the cache sizes as a function of the number of users served by each cache, and then proposed a novel coded caching scheme that outperforms the optimal scheme in~\cite{parrinello_it20}. %\BSr{Different from~\cite{parrinello_it20}, the new scheme in~\cite{parrinello_itw19} designed for the case when the cache placement is aware of the deterministic user-to-cache association phase, thus yielding extra gains.}

%The work in~\cite{wan_fogran_21} proposed a novel coded placement strategy by assuming that user-to-cache association during the placement phase is known, and showed that exploiting this extra knowledge yields additional coding gains. \BS{Similarly,} the work in~\cite{vu_wcnc17} also used the same knowledge to adjust cache sizes as a function of cache populations. In a similar \textcolor{black}{context}, the work in~\cite{parrinello_itw19} optimized the cache sizes as a function of the number of users served by each cache, and then proposed a novel coded caching scheme that outperforms the optimal scheme in~\cite{parrinello_it20}. \BSr{Different from~\cite{parrinello_it20}, the new scheme in~\cite{parrinello_itw19} designed for the case when the cache placement is aware of the deterministic user-to-cache association phase, thus yielding extra gains.}

At this point, we need to highlight that there is a \textcolor{black}{direct} relation between the aforementioned shared-cache setting and the so-called subpacketization bottleneck. In order to achieve \textcolor{black}{the originally promised \emph{theoretical coding gains}~\cite{man_it14}}, each file in the content library must be partitioned into $S$ unit-size subpackets, where $S$ scales exponentially with number of cache states\footnote{\textcolor{black}{When adopting the cache placement strategy in~\cite{man_it14} for $\Lambda$ cache-enabled users, we refer the content to be placed in a single cache as a cache state.}}. \BS{Subsequently,} a subset of these subpackets is cached at different nodes depending on the cache-enabled device's cache state. The number of distinct cache states, denoted as $\Lambda$, is then subject to some physical limitations, i.e., a file cannot be partitioned into more subpackets than a certain threshold, hence forcing $S$ to be less than some certain number, and inevitably forcing $\Lambda$ to be less than a certain value. In a nutshell, $\Lambda$ must be generally less than the total number of users since it is known that traditional coded caching techniques require file sizes that scale exponentially with $\Lambda$ (cf.~\cite{lampiris_jsac18,shanmugam_it16}.). In broad terms, reducing the number of cache states leads to a reduction in the coding gain, \BS{hinting out that the \emph{subpacketization bottleneck} is in fact a major factor on the performance.}

\BS{In this work, we aim to exploit cache-size differences and optimize the individual cache sizes to mitigate the cache-load imbalance bottleneck in stochastic shared-cache networks. To do so, we propose a coded caching scheme that optimizes the individual cache sizes based on each cache's load \textcolor{black}{statistics,} subject to a given cumulative cache capacity. We characterize the performance of our scheme and numerically verify its effectiveness in \textcolor{black}{substantially ameliorating} the impact of cache-load imbalance on the coding gain. We also show that for a \BS{deterministic} user-to-cache association setting, our scheme achieves the same state-of-the-art \textcolor{black}{(SoA)} delivery time as of~\cite{parrinello_itw19} with a significant (exponential) reduction in subpacketization, thus making our scheme more suitable than~\cite{parrinello_itw19} to apply in HetNets in the finite file size regime.}

\subsection{Notations}~\label{section:16}
Throughout this paper, we use the notation $[z]\triangleq$ $\left[1, 2, \dots, z \right]$, and we use $\mathbf{Q} \backslash \mathbf{P}$ to denote the set difference of 
$\mathbf{P}$ and $\mathbf{Q}$, which is the set of elements in $\mathbf{Q}$ but not in $\mathbf{P}$. We use $\textcolor{black}{\mathcal{X}^{[n]}_{k}\triangleq} \left\{\delta : \delta \subseteq [n], \left|\delta \right|=k \right\}$ and \textcolor{black}{we use $\delta(i)$ to denote} the $i$th element of $\delta$. 

%Unless otherwise stated, logarithms are assumed to have base 2. We use the following asymptotic notation: i) $f(x)= O(g(x))$ means that there exist constants $a$ and $c$ such that $f(x) \leq ag(x), \forall x > c$, ii) $f(x) = o(g(x))$ means that  $\lim_{x \rightarrow \infty  }\frac{f(x)}{g(x)} =0 $, iii) $f(x)=\Omega(g(x))$ if $g(x)= O(f(x))$, iv) $f(x) = \omega(g(x))$ means that  $\lim_{x \rightarrow \infty  }\frac{g(x)}{f(x)} =0 $, v)  $f(x)=\Theta(g(x))$ if $f(x)= O(g(x))$  and $f(x)=\Omega(g(x))$.  We use the term $\polylog(x)$ to denote the class of functions $\bigcup_{k\geq 1} O((\log x)^k)$ that are polynomial in $\log x$.

\section{Network Setting}\label{Sec:model}
We consider a heterogeneous \textcolor{black}{cache-aided network} setting which consists of a base station (BS) having access to a library of $N$ unit-sized files $\mathcal{F}=\left\{F^1, F^2, \dots, F^N \right\}$, \textcolor{black}{as well as consists of} $\Lambda$ cache-enabled helper nodes (i.e., caches), and $K$ \textcolor{black}{receiving} users. 
The BS delivers content via an error-free broadcast link of bounded capacity per unit of time to $K$ users, with the assistance of helper nodes. We assume that users within the coverage area of a cache $\lambda \in [\Lambda]$ have direct access to the content stored at that cache. We consider the scenario of non-uniform cache population intensities \textcolor{black}{where the number of users served by each cache may not be identical.} For any cache $\lambda \in [\Lambda]$, let $p_{\lambda}$ be the probability that a user appears in the coverage area \textcolor{black}{of this} $\lambda$th cache-enable helper node, \textcolor{black}{and let} $\mathbf{p} =\left[p_1, p_2, \dots, p_{\Lambda}\right]$ \textcolor{black}{denote the cache population intensities vector}, where $\sum_{\lambda \in [\Lambda]}p_{\lambda}=1$. Without loss of generality, we assume that $p_{1}\geq p_{2} \geq \dots \geq p_{\Lambda}$. At any given instance, we denote $\mathbf{V} =\left[v_1, v_2, \dots,v_{\Lambda} \right]$ to be the cache population vector, where $v_{\lambda}$ is the number of users \textcolor{black}{having} access to the content of cache $\lambda \in[\Lambda]$, \textcolor{black}{and we let} $\bar{\mathbf{V}} = K\textbf{p}=\left[\bar{v}_1, \bar{v}_2, \dots,\bar{v}_{\Lambda} \right]$ \textcolor{black}{be} the expected cache population vector. 

The size of each cache $\lambda \in \left[\Lambda\right]$ is \textcolor{black}{the} design parameter $M_{\lambda} \in (0,N]$ \textcolor{black}{(measured in units of file),} adhering to a cumulative sum cache-size constraint $\sum_{\lambda = 1}^{\Lambda} M_{\lambda} = M_{\Sigma}$. For any cache $\lambda \in [\Lambda]$, we denote by $\gamma_{\lambda} \triangleq \frac{M_{\lambda}}{N}$ the normalized cache capacity, and subsequently \textcolor{black}{we have the} normalized cache capacity vector $\bm{\gamma}= [\gamma_{1}, \gamma_{2}, \dots,  \gamma_{\Lambda}]$. Consequently, the normalized cumulative cache-size constraint takes the form
\begin{equation}
\sum_{\lambda=1}^{\Lambda}\gamma_{\lambda}=t\triangleq\frac{M_{\Sigma}}{N}.\label{eq:budget}
\end{equation}
The communication process consists of three phases; the \emph{storage allocation phase}, the \emph{content placement phase} and the \emph{delivery phase}.
The storage allocation phase involves allocating the cumulative cache capacity $M_{\Sigma}$ (or the normalized cumulative cache capacity $t$) to the caches subject to~\eqref{eq:budget}, 
\textcolor{black}{a process that results in the aforementioned normalized} cache capacity vector $\bm{\gamma}$. The content placement phase involves the placement of a portion of library-content, $\mathcal{Z}_{\lambda}$, in each cache $\lambda \in [\Lambda]$ \textcolor{black}{--- respecting} its allocated cache capacity $\gamma_{\lambda}$ \textcolor{black}{--- according to} a certain placement strategy $\mathcal{Z}=[\mathcal{Z}_1, \mathcal{Z}_2, \dots, \mathcal{Z}_{\Lambda}]$. We assume \textcolor{black}{that the} first two phases are aware \textcolor{black}{of the} cache population intensities $\textbf{p}$. However, 
\textcolor{black}{these two phases} are oblivious to the actual requests generated during the delivery phase. The delivery phase begins after each user $k\in [K]$ appears within the coverage area of one of the caches, and requests a content file $F^{d_k}\in \mathcal{F}$, where $d_k$ is the index of the file requested by user $k\in[K]$. Then for any \emph{request vector} $\mathbf{d}=\left[d_1, \dots, d_{v_1}, d_{v_1+1},\dots, d_K \right]$, using the knowledge of the content stored at each cache $\mathcal{Z}_{\lambda} \in \mathcal{Z}$, and the user-to-cache association, the BS delivers the content to the user. 
\subsection{Problem Definition} 
For a given normalized \textcolor{black}{cache-size budget} $t$, and cache population intensities vector $\textbf{p}$, our goal is to design a content placement strategy $\mathcal{Z}$, and a delivery scheme for the system where a BS is serving $K$ users with the help of $\Lambda$ caches. \textcolor{black}{Then our goal is to} evaluate its performance in terms of \textcolor{black}{the} time needed to complete the delivery of any request vector $\mathbf{d}$. 
\textcolor{black}{Given the random nature of our problem where at any given} instance of the problem we may experience a different cache population vector $\mathbf{V}$, our measure of interest is the average delay
\begin{align}
 \overline{T}(t)=E_{\mathbf{V}}[T(\mathbf{V})]= \sum_{\mathbf{V} \in \mathcal{V}} P(\mathbf{V})T(\mathbf{V}),
\end{align}
where $T(\mathbf{V})$ is the worst-case time needed to complete the delivery of any request vector $\mathbf{d}$ corresponding to a specific cache population vector $\mathbf{V}$\textcolor{black}{, where} $\mathcal{V}$ is the set of all possible cache population vectors, \textcolor{black}{and where} $P(\mathbf{V})$ is the probability of observing the cache population vector $\mathbf{V}$.
\section{Main Result}
In this section, we first present our main results on the performance of the stochastic network setting described in Section \ref{Sec:model}. \BS{After doing so, we will also} discuss the applicability as well as the efficacy of our proposed scheme for a deterministic shared cache setting studied in \cite{parrinello_itw19} \BS{by showing that it provides an exponential reduction in the subpacketization}.
Our first result is the characterization of the achievable delivery time $T(\mathbf{V})$ for any cache population vector $\mathbf{V}$. Crucial to this characterization is the greatest common divisor (\textcolor{black}{GCD}) of vector $\bar{\mathbf{V}}$, which we denote by $\alpha$, and the partition of $\mathbf{V}$ into a set $\mathcal{B}_V=[\mathbf{V}^1, \mathbf{V}^2, \dots, \mathbf{V}^{\beta_\mathbf{V}}]$ of $\beta_\mathbf{V}=\max_{i\in [\Lambda]} \frac{\alpha v_{i}}{\bar{v_i}}$ vectors according to the partition algorithm presented in Algorithm~\ref{alg:partV}. \textcolor{black}{Each} resulting partition vector $\mathbf{V}^j=[v^j_1, v^j_2, \dots, v^j_{\Lambda}]$ will then satisfy $v^j_\lambda \leq \frac{\bar{ v_{\lambda}}}{\alpha}$, where $\sum_{j \in [\beta_\mathbf{V}] } v^j_\lambda=v_\lambda$, $\forall \lambda \in [\Lambda]$. Under the assumption of a random user-to-cache association with cache population intensities vector $\mathbf{p}$ such that the expected cache population $\bar{v}_{\lambda} \in \bar{\mathbf{V}}$ is a non-negative integer for each cache $\lambda= [\Lambda]$, the achievable delay is given in the following theorem.
\begin{theorem}\label{th:tv}
In the $K$-user, $\Lambda$-cache setting with a normalized cache budget $t$, and a random user-to-cache association with cache population intensities vector $\mathbf{p}$, the delivery time for any cache population vector $\mathbf{V}$
\begin{align} \label{eq:tv}
T(\mathbf{V})= \!\!\! \sum_{j \in [\beta_\mathbf{V}] }\!\!\!\frac{\sum\limits_{\tau \in \mathcal{X}_{t+1}^{[\Lambda]}} \prod\limits_{i=1}^{t+1} \frac{\bar{v}_{\tau(i)}}{\alpha} - \!\!\!\!\! \sum\limits_{\tau \in \mathcal{X}_{t+1}^{\mathcal{A}_j}  } \prod\limits_{i=1}^{t+1}\left(\frac{\bar{v}_{\tau(i)}}{\alpha}-v^j_{\tau(i)}\right)} {\sum\limits_{\tau \in \mathcal{X}_{t}^{[\Lambda]}} \prod\limits_{i=1}^{t} \frac{\bar{v}_{\tau(i)}}{\alpha}}
\end{align}
is achievable if the expected cache population vector $\bar{\mathbf{V}}$ is a non-negative integer vector, where $\mathcal{A}_j \subseteq [\Lambda]$ is a subset of the set of caches, such that for each cache $\lambda \in \mathcal{A}_j$, $\frac{\bar{v}_{\lambda}}{\alpha}>v^j_{\lambda}$.
\end{theorem}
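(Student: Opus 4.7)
The plan is to design a three-step achievability argument: (i) a deterministic-looking placement that is built from the expected profile $\bar{\mathbf{V}}/\alpha$ rather than from the realized $\mathbf{V}$; (ii) a decomposition of the realized user-to-cache association into $\beta_{\mathbf{V}}$ ``virtual sub-instances'' $\mathbf{V}^{j}$ via Algorithm 1; and (iii) a delivery round for each sub-instance whose cost matches the $j$th summand in \eqref{eq:tv}. First, in the placement phase I would treat the network as if it had exactly $\bar{v}_{\lambda}/\alpha$ users permanently associated to each cache $\lambda$, then follow the Parrinello--Elia style heterogeneous-load placement of \cite{parrinello_itw19}, partitioning every file $F^{n}$ into subfiles $F^{n}_{\tau}$ indexed by the $t$-subsets $\tau\in\mathcal{X}^{[\Lambda]}_{t}$, with subfile size $\prod_{i=1}^{t}\frac{\bar{v}_{\tau(i)}/\alpha}{\sum_{\sigma\in\mathcal{X}^{[\Lambda]}_{t}}\prod \bar{v}_{\sigma(i)}/\alpha}$ chosen so that the cumulative memory constraint $\sum_{\lambda}\gamma_{\lambda}=t$ is respected by construction. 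This is exactly the placement for which the denominator of \eqref{eq:tv} equals the inverse subfile size and for which $\bar{v}_{\lambda}/\alpha$ plays the role of the ``number of users per cache.''

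Second, I would verify that Algorithm 1 decomposes the realized vector $\mathbf{V}$ into $\beta_{\mathbf{V}}=\max_{i}\alpha v_{i}/\bar{v}_{i}$ vectors $\mathbf{V}^{j}=(v^{j}_{1},\dots,v^{j}_{\Lambda})$ satisfying both $\sum_{j} v^{j}_{\lambda}=v_{\lambda}$ and, crucially, $v^{j}_{\lambda}\le\bar{v}_{\lambda}/\alpha$ for every $\lambda$ and every $j$. This last inequality is the whole reason we can reuse the placement built from $\bar{\mathbf{V}}/\alpha$: within the $j$th sub-instance, no cache is ``overloaded'' relative to the virtual profile that the placement was designed for. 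I would then serve the $\beta_{\mathbf{V}}$ sub-instances in series: users are assigned an index $j$ (via the algorithm), and in each round only the users with that index are served, while the remaining users wait.

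Third, I would compute the per-round delivery time. Within round $j$, I apply a Maddah-Ali--Niesen/Parrinello-type clique-covering delivery on the active $\mathbf{V}^{j}$ users, as in \cite{parrinello_itw19}. For every $(t+1)$-subset $\tau\in\mathcal{X}^{[\Lambda]}_{t+1}$ the scheme generates $\prod_{i=1}^{t+1}(\bar{v}_{\tau(i)}/\alpha)$ elementary transmissions under the assumption of full virtual load; each transmission whose $(t+1)$ cache coordinates all fall inside the ``unfilled'' set $\mathcal{A}_{j}=\{\lambda:\bar{v}_{\lambda}/\alpha>v^{j}_{\lambda}\}$ and, within those coordinates, is targeted at one of the $\prod_{i=1}^{t+1}(\bar{v}_{\tau(i)}/\alpha-v^{j}_{\tau(i)})$ phantom users, can simply be dropped; all remaining transmissions serve at least one active real user and must be sent. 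Dividing by the subpacket size (the denominator of \eqref{eq:tv}) yields exactly the $j$th summand, and summing over $j\in[\beta_{\mathbf{V}}]$ gives the claimed $T(\mathbf{V})$.

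The main technical obstacle will be step (ii): showing Algorithm 1 actually produces vectors that simultaneously satisfy $v^{j}_{\lambda}\le\bar{v}_{\lambda}/\alpha$ and $\sum_{j} v^{j}_{\lambda}=v_{\lambda}$ with $\beta_{\mathbf{V}}=\max_{i}\alpha v_{i}/\bar{v}_{i}$ rounds. Once this combinatorial fact is in hand, the per-round decoding argument reduces to a standard linear-independence check as in \cite{parrinello_it20,parrinello_itw19}, and the counting of ``phantom'' transmissions (the subtracted term with $\mathcal{A}_{j}$) is a direct enumeration of those $(t+1)$-subsets that are entirely supported on unused virtual positions.
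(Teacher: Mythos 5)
Your plan matches the paper's own proof essentially step for step: the placement built from the base profile $\hat{\mathbf{V}}=\bar{\mathbf{V}}/\alpha$ (realized in the paper via $\hat{v}_\lambda$ virtual caches per physical cache, which is the finer granularity your per-round transmission count implicitly requires), the partition of $\mathbf{V}$ into $\beta_{\mathbf{V}}$ rounds via Algorithm~1 with $v^j_\lambda\le\hat{v}_\lambda$, and the per-round clique-cover delivery in which exactly the $\sum_{\tau\in\mathcal{X}_{t+1}^{\mathcal{A}_j}}\prod_i(\hat{v}_{\tau(i)}-v^j_{\tau(i)})$ XORs supported entirely on phantom virtual positions are dropped. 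The approach is correct and not materially different from the paper's.
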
 
\begin{proof} The proof is deferred to~Section~\ref{section:scheme}. 
\end{proof}
With Theorem \ref{th:tv} in hand, we present our next result, which is the average delay $\overline{T}(t)$ corresponding to our stochastic network setting.

\begin{theorem}\label{th:tavg}
In the $K$-user, $\Lambda$-cache setting with a normalized cache budget $t$, and a random user-to-cache association with cache population intensities vector $\mathbf{p}$, the average delay of
\begin{align} \label{eq:tavg}
\overline{T}(t)= \sum_{\mathbf{V}\in \mathcal{V}} \frac{T(\mathbf{V}) K!}{\prod_{\lambda \in [\Lambda]}v_{\lambda}} \prod_{\lambda \in [\Lambda]} p_{\lambda}^{v_{\lambda}} 
\end{align}
is achievable if the expected cache population vector $\bar{\mathbf{V}}$ is a non-negative integer vector.
\end{theorem}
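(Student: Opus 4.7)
The plan is to recognize that Theorem~\ref{th:tavg} is essentially a direct expectation computation: once $T(\mathbf{V})$ is achievable for every realization $\mathbf{V}$ via Theorem~\ref{th:tv}, the average delivery time is just $\mathbb{E}_{\mathbf{V}}[T(\mathbf{V})]$, so the real content is to identify the probability mass function $P(\mathbf{V})$ of the cache population vector under the stated random user-to-cache association model and then to write the expectation explicitly.

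First, I would formalize the randomness. By the problem description in Section~\ref{Sec:model}, each of the $K$ users independently falls within the coverage area of cache $\lambda\in[\Lambda]$ with probability $p_{\lambda}$, where $\sum_{\lambda\in[\Lambda]}p_{\lambda}=1$. Hence the random vector $\mathbf{V}=[v_1,\dots,v_{\Lambda}]$ counting how many users associate with each cache is the outcome of $K$ i.i.d. categorical trials over the cache alphabet $[\Lambda]$ with weights $\mathbf{p}$.

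Second, I would invoke the standard multinomial identity to obtain
\begin{equation*}
P(\mathbf{V}) \;=\; \binom{K}{v_1,v_2,\dots,v_{\Lambda}}\prod_{\lambda\in[\Lambda]} p_{\lambda}^{v_{\lambda}} \;=\; \frac{K!}{\prod_{\lambda\in[\Lambda]} v_{\lambda}!}\prod_{\lambda\in[\Lambda]} p_{\lambda}^{v_{\lambda}},
\end{equation*}
which matches the weighting used inside the sum of~\eqref{eq:tavg} (modulo the factorial in the denominator written in the statement). Substituting this into the definition $\overline{T}(t)=\sum_{\mathbf{V}\in\mathcal{V}}P(\mathbf{V})\,T(\mathbf{V})$, and using Theorem~\ref{th:tv} to guarantee achievability of $T(\mathbf{V})$ for each $\mathbf{V}$ whenever $\bar{\mathbf{V}}$ is integer-valued, immediately yields~\eqref{eq:tavg}.

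There is no genuinely hard step here: the structural work was done in Theorem~\ref{th:tv}, and Theorem~\ref{th:tavg} follows by linearity of expectation over the set $\mathcal{V}$ of all realizable cache population vectors. The only mildly delicate point worth flagging in the writeup is to argue that the time-sharing/averaging is admissible, i.e., that a single placement $\mathcal{Z}$ (chosen without knowing $\mathbf{V}$ but knowing $\mathbf{p}$) yields delay $T(\mathbf{V})$ simultaneously for every realization, so that taking the expectation over $\mathbf{V}$ legitimately gives the average delay of one scheme rather than a lower envelope of many; this follows because the scheme underlying Theorem~\ref{th:tv} is built from $\bar{\mathbf{V}}$ and $\mathbf{p}$ alone and adapts to $\mathbf{V}$ only in the delivery phase.
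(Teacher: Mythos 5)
Your proof is correct and follows essentially the same route as the paper: identify $P(\mathbf{V})$ as the multinomial distribution with parameters $K$ and $\mathbf{p}$, then combine it with the achievable $T(\mathbf{V})$ from Theorem~\ref{th:tv} via the definition of $\overline{T}(t)$. You also correctly note that the denominator should read $\prod_{\lambda\in[\Lambda]} v_{\lambda}!$ rather than $\prod_{\lambda\in[\Lambda]} v_{\lambda}$, which is a typo carried in the paper's statement and proof.
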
 
\begin{proof} From the fact \textcolor{black}{that the} probability distribution $P(\mathbf{V})$ follows the well-known multinomial distribution with parameter $\mathbf{p}$, we have 
\begin{align}\label{eq:pv} 
P(\mathbf{V}) = \frac{K!}{\prod_{\lambda \in [\Lambda]}v_{\lambda}} \prod_{\lambda \in [\Lambda]} p_{\lambda}^{v_{\lambda}}.
\end{align}
Combining \eqref{eq:tv} with \eqref{eq:pv} allows us to obtain \eqref{eq:tavg}, which concludes the proof. 
\end{proof}
Next, we see the applicability of our scheme in a similar setting, but with a fixed user-to-cache association, which was initially studied in \cite{parrinello_itw19}. 

\begin{corollary}\label{co:tmean}
In the $K$-user, $\Lambda$-cache setting with a normalized cache budget $t$, and a fixed user-to-cache association with cache population vector $\bar{\mathbf{V}}$, the proposed scheme in Section \ref{section:scheme} achieves the delivery time of
\begin{align} \label{eq:tmean}
T(\bar{\mathbf{V}})=  \alpha \frac{\sum\limits_{\tau \in \mathcal{X}_{t+1}^{[\Lambda]}} \prod\limits_{i=1}^{t+1} \frac{\bar{v}_{\tau(i)}}{\alpha} } {\sum\limits_{\tau \in \mathcal{X}_{t}^{[\Lambda]}} \prod\limits_{i=1}^{t} \frac{\bar{v}_{\tau(i)}}{\alpha}},
\end{align}
which is same as of \cite[equation (11)]{parrinello_itw19} and it requires the \textcolor{black}{subpacketization} rate of
\begin{align} \label{eq:S}
S=  \sum_{\tau \in \mathcal{X}_{t}^{[\Lambda]}} \prod_{j=1}^{t} \frac{\bar{v}_{\tau(j)}}{\alpha},
\end{align}
which is $\alpha^t$ times less than the \textcolor{black}{subpacketization} rate of~\cite[equation (7)]{parrinello_itw19}.
\end{corollary}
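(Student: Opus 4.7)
The plan is to derive Corollary~\ref{co:tmean} by specializing Theorem~\ref{th:tv} to the deterministic case $\mathbf{V}=\bar{\mathbf{V}}$, and then by directly counting the subpacketization induced by the placement of Section~\ref{section:scheme}.

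First, I would evaluate the partition parameters when the realized cache population vector equals its mean. Substituting $v_i=\bar{v}_i$ into $\beta_{\mathbf{V}}=\max_{i\in[\Lambda]}\frac{\alpha v_i}{\bar{v}_i}$ immediately yields $\beta_{\bar{\mathbf{V}}}=\alpha$. Inspecting Algorithm~\ref{alg:partV} with input $\bar{\mathbf{V}}$, each of the $\alpha$ partition blocks $\bar{\mathbf{V}}^j$ must satisfy both $v^j_\lambda\leq \bar{v}_\lambda/\alpha$ and $\sum_{j=1}^{\alpha} v^j_\lambda=\bar{v}_\lambda$; summing $\alpha$ non-negative numbers each at most $\bar{v}_\lambda/\alpha$ to obtain $\bar{v}_\lambda$ forces the equality $v^j_\lambda=\bar{v}_\lambda/\alpha$ for every $j\in[\alpha]$ and every $\lambda\in[\Lambda]$. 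As an immediate consequence, each set $\mathcal{A}_j=\{\lambda:\bar{v}_\lambda/\alpha>v^j_\lambda\}$ is empty, and so the subtractive term in the numerator of~\eqref{eq:tv} vanishes for every $j$.

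With this, the expression in~\eqref{eq:tv} collapses to a sum of $\alpha$ identical summands, giving
\begin{align*}
T(\bar{\mathbf{V}})=\alpha\,\frac{\sum_{\tau\in\mathcal{X}_{t+1}^{[\Lambda]}}\prod_{i=1}^{t+1}\bar{v}_{\tau(i)}/\alpha}{\sum_{\tau\in\mathcal{X}_{t}^{[\Lambda]}}\prod_{i=1}^{t}\bar{v}_{\tau(i)}/\alpha},
\end{align*}
which is precisely~\eqref{eq:tmean}. Matching this expression with~\cite[eq.~(11)]{parrinello_itw19} then amounts to a routine re-indexing argument, obtained by pulling the common factors $\alpha^{-(t+1)}$ and $\alpha^{-t}$ out of the numerator and denominator respectively and collecting them with the leading~$\alpha$.

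For the subpacketization claim~\eqref{eq:S}, I would count the cache states produced by our placement. Because each cache $\lambda$ is effectively partitioned into $\bar{v}_\lambda/\alpha$ virtual cache groups -- rather than $\bar{v}_\lambda$ as in~\cite{parrinello_itw19} -- and each file is subdivided across every unordered choice of $t$ virtual groups drawn from $t$ distinct physical caches, the total number of subfiles equals $\sum_{\tau\in\mathcal{X}^{[\Lambda]}_t}\prod_{i=1}^{t}\bar{v}_{\tau(i)}/\alpha$. Comparing term-by-term with $\sum_{\tau}\prod_{i=1}^{t}\bar{v}_{\tau(i)}$ of \cite[eq.~(7)]{parrinello_itw19} yields the claimed $\alpha^t$ reduction factor. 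The main obstacle is not the algebra but rather verifying the structural correspondence between the partition-forced equalities $v^j_\lambda=\bar{v}_\lambda/\alpha$ and the virtual-group placement; once this is established, both the delivery time and the subpacketization statements follow by direct substitution and counting.
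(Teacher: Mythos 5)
Your proposal is correct and follows essentially the same route as the paper: the paper's one-line proof simply points to the per-round delay~\eqref{eq:Tbase} and the subpacketization count~\eqref{eq:S_disc}, and your argument amounts to making that explicit by showing that Algorithm~\ref{alg:partV} applied to $\bar{\mathbf{V}}$ yields $\alpha$ identical rounds with $\mathbf{V}^j=\hat{\mathbf{V}}$ and empty $\mathcal{A}_j$, so that \eqref{eq:tv} collapses to $\alpha\,T(\hat{\mathbf{V}})$ while the placement gives exactly $|\mathcal{Q}_t^{\mathcal{C}}|$ subfiles. The added detail (the forcing argument $v^j_\lambda=\bar{v}_\lambda/\alpha$ and the term-by-term $\alpha^{-t}$ comparison) is a faithful elaboration rather than a different method.
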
 
\begin{proof} The proof is straightforward from~\eqref{eq:S_disc} and~\eqref{eq:Tbase}.% Theorem \ref{th:tv}.
\end{proof}

\textcolor{black}{Corollary \ref{co:tmean} \textcolor{black}{reveals the substantial benefits of our scheme compared to the \textcolor{black}{SoA}}~\cite{parrinello_itw19} as it achieves the same delivery time with a significantly reduced -- with a factor of $\alpha^t$ --  subpacketization. This exponential reduction in subpacketization is a crucial contribution as the subpacketization is a major bottleneck in the applicability of coded caching schemes\cite{lampiris_jsac18, malik_tcom21} especially in the finite file size regimes. To illustrate this gain, in Figure~\ref{fig:Subpacket}, we compare the required subpacketization of our scheme with the scheme in \cite{parrinello_itw19} for $\bar{\mathbf{V}}=[8,6,6,4,2,2]$.} We can see that even for a modest network consisting of an extremely small number of users, our scheme requires significantly less subpacketization. 
\begin{figure}[t]
\centering
 \includegraphics[width=0.99\linewidth]{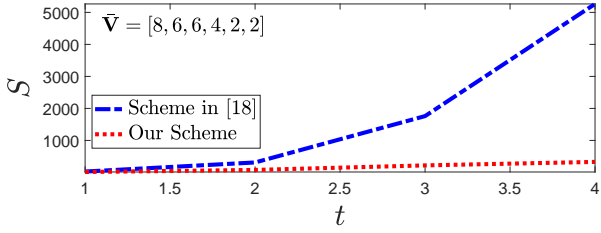}
\caption{Comparison of the required subpacketization.}
\label{fig:Subpacket}
%\vspace{-0.36cm}
\end{figure}
\begin{remark}~\label{R:0}
\BS{An interesting observation of our scheme is that $\alpha$ can be treated as a trade-off parameter between the subpacketization and the delivery time. For example, when $\bar{\mathbf{V}}=[20,15,15,5,5,4]$, we have $\alpha=1$. However, if we associate one virtual user to the $\Lambda$-th cache, $\alpha$ increases from $1$ to $5$, thus, reducing the subpacketization by a factor of $5^t$ with a modest increase in the delivery time. We leave the study of this trade-off for future work.}
\end{remark}

\section{Placement and delivery}\label{section:scheme}
In this section, we present the proof of Theorem \ref{th:tv}. We describe the content placement and delivery strategies that can achieve the average delay of \eqref{eq:tv}.
\subsection{Content Placement}~\label{section:21}
The content placement scheme is based on the idea of assigning more storage capacity to the caches with high population intensities. \textcolor{black}{We denote $\hat{\mathbf{V}}=[\hat{v}_1, \hat{v}_2, \dots, \hat{v}_{\Lambda}]$ as the \emph{base cache population vector}, which is given as $\hat{\mathbf{V}}\triangleq \frac{\bar{\mathbf{V}}}{\alpha}$, where $\alpha$ is the GCD of the elements in $\bar{\mathbf{V}}$. For a base cache population vector $\hat{\mathbf{V}}$, we assume that there are $\hat{\Lambda}\triangleq \sum_{\lambda=1}^{\Lambda} \hat{v}_{\lambda}$ virtual caches such that each cache $\lambda \in [\Lambda]$ consists of $\hat{v}_{\lambda}$ virtual caches. We then use $\mathcal{C}= [1, 2, \dots,  \hat{\Lambda} ]$ to denote the set of virtual caches, and $\mathcal{C}_{\lambda}= [\hat{v}_{\lambda-1}+1, \dots, \hat{v}_{\lambda-1}+\hat{v}_{\lambda}]$ (assuming $\hat{v}_{0}\triangleq 0$) to denote the set of virtual caches that belongs to cache $\lambda \in[\Lambda]$.} Let $\mathcal{Q}_{t}^{\mathcal{C}}\subseteq \mathcal{X}_{t}^{\mathcal{C}}$ be the set of all possible \textcolor{black}{$t$-tuples of} $\mathcal{C}$ such that for each tuple $\tau \in \mathcal{Q}_{t}^{\mathcal{C}}$, no two virtual caches $i$, $j \in \tau$ belong to the same cache $\lambda \in [\Lambda]$. Next, each content file $F^i \in \mathcal{F}$ is divided into $\left|\mathcal{Q}_{t}^{\mathcal{C}}\right|$ subpackets, and labeled as $F^i= \left\{ F_{\tau}^i \right\}_{\tau \in \mathcal{Q}_{t}^{\mathcal{C}}}$.  Then, the set of contents to be cached at each cache $\lambda \in [\Lambda]$ is given by 
\begin{align}
  \mathcal{Z}_{\lambda}= \left\{\mathcal{Z}_{\hat{\lambda}}: \hat{\lambda} \in \mathcal{C}_{\lambda}\right\}, 
\end{align}
 where 
 \begin{align}
    \mathcal{Z}_{\hat{\lambda}} = \left\{F_{\tau}^i : F_{\tau}^i \in F^i,    \tau \in \mathcal{Q}_{t}^{\mathcal{C}}, \hat{\lambda} \in \tau, F^i \in \mathcal{F} \right\}.  
 \end{align}
 From the fact that for each cache $\lambda$ we have $\hat{v}_{\lambda}$ virtual caches, \textcolor{black}{we can conclude that for any} $t$-tuple of caches $\tau \in \mathcal{X}_{t}^{[\Lambda]}$ there \textcolor{black}{must be} $\prod_{i=1}^{t}\hat{v}_{\tau(i)}$ $t$-tuples of virtual caches that belong to the set $\mathcal{Q}_{t}^{\mathcal{C}}$. Consequently, the number of $t$-tuples of $\mathcal{C}$ in the set  $\mathcal{Q}_{t}^{\mathcal{C}}$ is given as
\begin{align}\label{eq:S_disc}
 \left|\mathcal{Q}_{t}^{\mathcal{C}}\right|= \sum_{\tau \in \mathcal{X}_{t}^{[\Lambda]}} \prod_{j=1}^{t} \hat{v}_{\tau(j)},   
\end{align}
 and the normalized cache capacity required at any cache $\lambda \in [\Lambda]$ is given as 
\begin{align}\label{eq:gamma}
 \gamma_{\lambda} =\frac{\left|\mathcal{Z}_{\lambda}\right|}{N\left|\mathcal{Q}_{t}^{\mathcal{C}}\right|}= \frac{  \sum\limits_{\tau \in \mathcal{X}_{t}^{[\Lambda]}: \tau \ni \lambda  } \prod\limits_{j=1}^{t} \hat{v}_{\tau(j)} } {  \sum\limits_{\tau \in \mathcal{X}_{t}^{[\Lambda]}} \prod\limits_{j=1}^{t} \hat{v}_{\tau(j)}}.
\end{align}
Thus, \eqref{eq:gamma} yields our proposed storage allocation strategy.
We can see that $\sum_{\lambda\in [\Lambda]} \gamma_{\lambda} = t$, as for each \textcolor{black}{$\tau \in \mathcal{Q}_{t}^{C}$}, the corresponding subpackets $\left\{F_{\tau}^i : F^i \in \mathcal{F} \right\}$ are placed in $t$ caches. Thus, the content placement strategy satisfies the total caching budget constraint~\eqref{eq:budget}.  
\begin{algorithm}[h]
  \caption{Cache Population Vector Partition}
  \label{alg:partV}
  \begin{algorithmic}
 \STATE\textbf{Input:} $\mathbf{V}$ and $\bar{\mathbf{V}}$ \\
\STATE \textbf{Output:} $\mathcal{B}_V$ \\
\STATE \textbf{Initialization:} \!$\alpha\leftarrow GCD(\bar{\mathbf{V}})$, $\beta_{\mathbf{V}}\! \leftarrow\! \max\limits_{i\in [\Lambda]} \frac{\alpha v_{i}}{\bar{v_i}}$, $\mathcal{B}_V \leftarrow \phi$ \\%$\mathbf{V}_{temp} \leftarrow \mathbf{V}$ \\
\STATE \textbf{for} $j$ \textbf{from} 1 \textbf{to} $\beta_{\mathbf{V}}$ \textbf{do} \\
\STATE \ \ \textbf{for} $i$ \textbf{from} 1 \textbf{to} $\Lambda$ \textbf{do} \\
\STATE \ \ \ \ \textbf{if} $v_i > \frac{\bar{v}_i}{\alpha}$\\
\STATE \ \ \ \ \ \ $v^j_i  \leftarrow \frac{\bar{v}_i}{\alpha} $, \ \ \ $v_i  \leftarrow v_i- \frac{\bar{v}_i}{\alpha} $ \\
\STATE \ \ \ \ \textbf{else}\\
\STATE \ \ \ \ \ \ $v^j_i  \leftarrow v_i$, \ \ \  $v_i  \leftarrow 0$
\STATE \ \ \ \ \textbf{end if} \\
\STATE \ \ \textbf{end for} \\
\STATE \ \ $\mathcal{B}_V  \leftarrow [\mathcal{B}_V, [v^j_1, v^j_2, \dots, v^j_{\Lambda}]]$
\STATE \textbf{end for} \\
  \end{algorithmic}
\end{algorithm}

\subsection{Content Delivery}
\textcolor{black}{We will consider the worst-case delivery scenario where each user} requests a different file. Once the BS is notified of the cache population vector $\mathbf{V}$ and the corresponding request vector $\mathbf{d}$, it commences delivery. We propose a delivery scheme that is completed in $\beta_\mathbf{V}=\max_{i\in [\Lambda]} \frac{v_{i}}{\hat{v}_i}$ rounds, where the content is delivered to at most $\hat{v}_{\lambda}$ users from each cache $\lambda \in [\Lambda]$ in each round. We divide the cache population vector $\mathbf{V}$ into a set of $\beta_\mathbf{V}$ vectors $\mathcal{B}_V= [\mathbf{V}^1, \mathbf{V}^2, \dots, \mathbf{V}^{\beta_V}]$ based on the procedure described in Algorithm \ref{alg:partV}, such that for all $j\in[\beta_V]$, $\mathbf{V}^j=[v^j_1, v^i_2, \dots, v^j_{\Lambda}]$ \BS{satisfies} $v^j_\lambda \leq \hat{v}_{\lambda}$ and $\sum_{j \in [\beta_\mathbf{V}] } v^j_\lambda=v_\lambda$ for all $\lambda \in [\Lambda]$. \textcolor{black}{In the following, we describe our delivery strategy for the two only possible cases after applying Algorithm~\ref{alg:partV}.}

%where BS serves $\hat{\Lambda}$ users and then for the for the case where BS serves less $\hat{\Lambda}$ users.}

%$\mathbf{V}^j=\hat{\mathbf{V}}$ and then for the case \textcolor{black}{when $v^j_{\lambda}< \hat{v}_{\lambda}$ for some $\lambda \in [\Lambda]$}.   

\textbf{Case 1: $\mathbf{V}^j= \hat{\mathbf{V}}$}: \textcolor{black}{In this case, the BS will serve $\hat{\Lambda}$ users based on the base cache population vector $\hat{\mathbf{V}}$.} Let $\mathbf{U}=[u_1, u_2, \dots, u_{\hat{\Lambda}}]$ denote the set of indices of users that corresponds to $\hat{\mathbf{V}}$, and $\mathbf{U}_{\lambda}=\left\{u_i\right\}_{i=\hat{v}_{\lambda-1}+1}^{\hat{v}_{\lambda-1}+\hat{v}_{\lambda}}$ be the set of users associated \textcolor{black}{to} cache $\lambda \in [\Lambda]$ (assuming $\hat{v}_{0}=0$). The corresponding request vector is $\mathbf{d^{\hat{\mathbf{V}}}}=\left[d_{u_1}, d_{u_2}, \dots, d_{u_{\hat{\Lambda}}} \right]$. Let $\mathcal{Q}_{t+1}^{\mathcal{C}}\subseteq \mathcal{X}_{t+1}^{\mathcal{C}}$ be the set of \textcolor{black}{all possible $(t+1)$-tuples of $\mathcal{C}$} such that for each tuple $\tau \in \mathcal{Q}_{t+1}^{\mathcal{C}}$, no two virtual caches $i$, $j \in \tau$ belong to the same physical cache. Then, for each $(t+1)$-tuple $\tau \in \mathcal{Q}_{t+1}^{\mathcal{C}}$, the BS transmits the \textcolor{black}{following XOR:}
\begin{align}
    \mathcal{Y}_{\tau}= \oplus_{\hat{\lambda} \in \tau} F^{d_{u_{\hat{\lambda}}}}_{\tau \backslash \hat{\lambda}}. 
\end{align}
The structure of $\mathcal{Y}_{\tau}$ allows to serve $t+1$ users simultaneously as each user can easily decode its required subpacket using the content $\mathcal{Z}_{\lambda}$ of its associated cache $\lambda\in [\Lambda]$. Let $\mathcal{Y}= \left\{\mathcal{Y}_{\tau}:\tau \in \mathcal{Q}_{t+1}^{\mathcal{C}} \right\}$ denote the set of all transmissions. In order to completely serve the request vector $\mathbf{d^{\hat{\mathbf{V}}}}$ corresponding to cache population vector $\hat{\mathbf{V}}$, the BS transmits $\left|\mathcal{Y}\right|=\left|\mathcal{Q}_{t+1}^{\mathcal{C}}\right|= \sum\limits_{\tau \in \mathcal{X}_{t+1}^{[\Lambda]}} \prod\limits_{j=1}^{t+1} \hat{v}_{\tau(j)}$ XORs in the set $\mathcal{Y}$. Thus the corresponding transmission delay is given as 
\begin{align}\label{eq:Tbase}
T(\hat{\mathbf{V}}) =\frac{\sum\limits_{\tau \in \mathcal{X}_{t+1}^{[\Lambda]}} \prod\limits_{i=1}^{t+1} \hat{v}_{\tau(i)} }{\sum\limits_{\tau \in \mathcal{X}_{t}^{[\Lambda]}} \prod\limits_{i=1}^{t} \hat{v}_{\tau(i)}}.
\end{align}
\textbf{Case 2: $\mathbf{V}^j \ni v^j_{\lambda}< \hat{v}_{\lambda} \text{ for some } \lambda \in [\Lambda]$}: 
\textcolor{black}{The delivery scheme for this case is exactly the same as for the case when $\mathbf{V}^j = \hat{\mathbf{V}}$. However, the number of users to be served in this case is less than $\hat{\Lambda}$, i.e., $\vert\mathbf{U}\vert < \hat{\Lambda}$.} Hence, there may exists some subpackets in $\mathcal{Y}$ that \textcolor{black}{does not} serve any user, \textcolor{black}{and the BS only transmits} the subpacket $\mathcal{Y}_{\tau}\in\mathcal{Y}$ if it serves at least one user. \textcolor{black}{Let $\mathcal{A}_j\!\subseteq\![\Lambda]$ be the subset of caches such that for each cache $\lambda\!\in\!\mathcal{A}_j$, $v^j_{\lambda}\!<\!\hat{v}_{\lambda}$ holds. Then, for any cache population vector $\mathbf{V}^j$, the total number of subpackets $\mathcal{Y}_{\tau}\in\mathcal{Y}$ that will not serve any user is given by \!\!\!$\sum\limits_{\tau \in \mathcal{X}_{t+1}^{\mathcal{A}_j}  } \prod\limits_{i=1}^{t+1} (\hat{v}_{\tau(i)}-v^j_{\tau(i)})$.
}
\textcolor{black}{Consequently,} for any cache population vector $\mathbf{V}^j \ni v^j_{\lambda}\leq \hat{v}_{\lambda}$ $\forall \ \lambda \in [\Lambda]$, the  transmission delay $T(\mathbf{V}^j)$ is given as
\begin{align}
T(\mathbf{V}^j) =       \frac{\sum\limits_{\tau \in \mathcal{X}_{t+1}^{[\Lambda]}} \prod\limits_{i=1}^{t+1} \hat{v}_{\tau(i)} -  \sum\limits_{\tau \in \mathcal{X}_{t+1}^{\mathcal{A}_j}  } \prod\limits_{i=1}^{t+1} (\hat{v}_{\tau(i)}-v^j_{\tau(i)})} {\sum\limits_{\tau \in \mathcal{X}_{t}^{[\Lambda]}} \prod\limits_{i=1}^{t} \hat{v}_{\tau(i)}}.
\end{align}
\textcolor{black}{Hence,} the transmission delay $T(\mathbf{V})$ corresponding to the cache population vector $\mathbf{V}$ is equal to \eqref{eq:tv}.

\textbf{Example:} Let us take \textcolor{black}{the} example of $K=N=10$, $\Lambda=4$, $t=2$, and $\mathbf{p}=(0.4, 0.2, 0.2, 0.2)$. Then the expected cache population vector is $\bar{\mathbf{V}}=[4, 2, 2, 2]$, and consequently $\hat{\mathbf{V}}=[2, 1, 1, 1]$ ($\alpha=2$ for $\bar{\mathbf{V}}$). The set of virtual caches is $\mathcal{C}=[1, 2, 3, 4, 5]$, where the virtual caches $\mathcal{C}_1=[1,2]$, $\mathcal{C}_2=[3]$, $\mathcal{C}_3=[4]$, and $\mathcal{C}_4=[5]$ belong to the caches 1, 2, 3, and 4 respectively. Then we have nine 2-tuples in the set $\mathcal{Q}_{t}^{\mathcal{C}}=$ $[(1,3), (1,4), (1,5), (2,3),$ $ (2,4), (2,5), (3,4), (3,5), (4,5)]$ and each file is divided into nine subpackets. The content placement at each cache is given as \\
$\mathcal{Z}_{1}\!=\!\left\{\!F^i_{\tau}\!:\!\tau\!\in\![(1,3),\!(1,4),\!(1,5),\!(2,3), (2,4), (2,5)],  i\!\in\![N]\!\right\}$, \\ 
$\mathcal{Z}_{2}\!=\!\left\{\!F^i_{\tau}\!:\!\tau\!\in\![(1,3),\!(2,3),\!(3,4),\!(3,5)], i\!\in\![N]\!\right\}$,\\
$\mathcal{Z}_{3}\!=\!\left\{\!F^i_{\tau}\!:\!\tau\!\in\![(1,4),\!(2,4),\!(3,4),\!(4,5)], i\!\in\![N]\!\right\}$,\\
$\mathcal{Z}_{4}\!=\!\left\{\!F^i_{\tau}\!:\!\tau\!\in\![(1,5),\!(2,5),\!(3,5),\!(4,5)], i\!\in\![N]\!\right\}$.\\
This placement leads to the normalized cache capacity allocation of $\gamma= [\frac{6}{9},\frac{4}{9}, \frac{4}{9}, \frac{4}{9} ]$. 
Now, let us move to the content delivery phase. Let us assume the case of $\mathbf{V}=[6,2,1,1]$, where users 1 to 6 have access to cache 1 and \textcolor{black}{request} the content $F^1$, $F^2$, $F^3$, $F^4$, $F^5$, and $F^6$ respectively, users 7 and 8 have access to cache 2 and request $F^7$ and $F^8$ respectively, user 9 has access to cache 3 and requests $F^9$, and user 10 has access to cache 4 and requests $F^{10}$.  The BS partitions the cache population vector into the set of $\beta_\mathbf{V}=3$ vectors, and transmits the content in $3$ rounds. The partition of $\mathbf{V}$ based on Algorithm \ref{alg:partV} leads to $\mathcal{B}_V= [(2,1,1,1), (2,1,0,0), (2,0,0,0)]$, and the corresponding demand vectors are $\mathbf{d}^{\mathbf{V}^1}=[1,2,7,9,10]$, $\mathbf{d}^{\mathbf{V}^2}=[3,4,8]$, and $\mathbf{d}^{\mathbf{V}^3}=[5,6]$. 
In the first round of delivery, the BS serves user 1 and 2 from cache 1, user 7 from cache 2, user 9 from cache 3, and user 10 from cache 4. For this round, we have $\mathbf{V}^1= \hat{\mathbf{V}}$, thus, for the demand vector $\mathbf{d}^{\mathbf{V}^1}=[1,2,7,9,10]$, the BS transmits the following seven subpackets based on set $\mathcal{Q}_{t+1}^{\mathcal{C}}=[(1,\!3,\!4), (1,\!3,\!5), (1,\!4,\!5), (2,\!3,\!4), (2,\!3,\!5), (2,\!4,\!5), (3,\!4,\!5)]$,
\begin{align*}
\mathcal{Y}_{1,3,4}&\!=\!F^1_{3,4} \oplus F^7_{1,4} \oplus F^9_{1,3},\ \  \mathcal{Y}_{1,3,5}\!=\!F^1_{3,5} \oplus F^7_{1,5} \oplus F^{10}_{1,3}\\
\mathcal{Y}_{1,4,5}&\!=\!F^1_{4,5} \oplus F^9_{1,5} \oplus F^{10}_{1,4},\ \  \mathcal{Y}_{2,3,4}\!=\!F^2_{3,4} \oplus F^7_{2,4} \oplus F^9_{2,3}\\
\mathcal{Y}_{2,3,5}&\!=\!F^2_{3,5} \oplus F^7_{2,5} \oplus F^{10}_{2,3},\ \  \mathcal{Y}_{2,4,5}\!=\!F^2_{4,5} \oplus F^9_{2,5} \oplus F^{10}_{2,4}\\
\mathcal{Y}_{3,4,5}&\!=\!F^7_{4,5} \oplus F^9_{3,5} \oplus F^{10}_{3,4}.
\end{align*}
After this round user 1, 2, 7, 9, and 10 can successfully decode their required files. 
Then, in the second round of delivery, the BS serves users 3 and 4 from cache 1 and user 8 from cache 2.  \textcolor{black}{For this round, we have $\mathbf{V}^2 \neq \hat{\mathbf{V}}$, $\mathcal{A}_2= [3,4]$, and the demand vector $\mathbf{d}^{\mathbf{V}^2}=[3,4,8]$. Thus, the BS transmits the following seven subpackets based on set $\mathcal{Q}_{t+1}^{\mathcal{C}}$,}
\begin{align*}
\mathcal{Y}_{1,3,4}&=F^3_{3,4} \oplus F^8_{1,4},\ \ \mathcal{Y}_{1,3,5}=F^3_{3,5} \oplus F^8_{1,5}\\
\mathcal{Y}_{2,3,4}&=F^4_{3,4} \oplus F^8_{2,4},\ \ \mathcal{Y}_{2,3,5}=F^4_{3,5} \oplus F^8_{2,5}\\
\mathcal{Y}_{1,4,5}&=F^3_{4,5},\ \  \mathcal{Y}_{2,4,5}=F^4_{4,5}, \ \ \mathcal{Y}_{3,4,5}=F^8_{4,5}.
\end{align*}
After this round users 3, 4, and 8 can successfully decode their required files.
Next, in the final round of delivery, the BS serves users 5 and 6 from cache 1. For this round, \textcolor{black}{we have $\mathbf{V}^3 \neq \hat{\mathbf{V}}$, $\mathcal{A}_3= [2,3,4]$, and the demand vector $\mathbf{d}^{\mathbf{V}^3}=[5,6]$.
We can see that subpacket $\mathcal{Y}_{3,4,5}$ will not serve any user, thus, the BS transmits the following six subpackets based on set $\mathcal{Q}_{t+1}^{\mathcal{C}}\backslash (3,4,5)$,}
\begin{align*}
\mathcal{Y}_{1,3,4}&=F^5_{3,4},\ \ \mathcal{Y}_{1,3,5}=F^5_{3,5},\ \ \mathcal{Y}_{1,4,5}=F^5_{4,5} \\
\mathcal{Y}_{2,3,4}&=F^6_{3,4},\ \ \mathcal{Y}_{2,3,5}=F^6_{3,5},\ \ \mathcal{Y}_{2,4,5}=F^6_{4,5}.
\end{align*}
After this round users 5 and 6 can successfully decode their required files. This completes the content delivery phase, which results in a delivery time of $T(\mathbf{V})= \frac{20}{9}$.

\section{Numerical Evaluation}\label{Sec:4}
We know from Theorem~\ref{th:tavg} that it is computationally expensive to numerically evaluate the exact average delay even for small system parameters. \textcolor{black}{This is due} to the fact that \textcolor{black}{such evaluation} would require \textcolor{black}{the generation of the set} $\mathcal{V}$ of all possible cache population vectors $\mathbf{V}$, which \textcolor{black}{corresponds to} the so-called weak composition problem, and \textcolor{black}{where the} cardinality of $\mathcal{V}$ is known to \textcolor{black}{grow} exponentially with system parameters $K$ and $\Lambda$. \textcolor{black}{Instead}, we proceed to numerically evaluate our results by using the \emph{sampling-based numerical} (SBN) approximation method, where we generate a large set $\mathcal{V}_1$ of randomly generated cache population vectors $\mathbf{V}$ based on cache population intensities $\mathbf{p}$, and approximate $\overline{T}(t)$ as
\begin{align}\label{eq:tavgs} 
\overline{T}(t)\approx \frac{1}{|\mathcal{V}_1|}\sum_{\mathbf{V}\in \mathcal{V}_1} T(\mathbf{V}),
 \end{align}
where $T(\mathbf{V})$ is given in~\eqref{eq:tv}. Then, the corresponding approximate performance is evaluated by comparing it with the achievable average delay for the uniform cache size from \cite{malik_tcom21}, which is given as
\begin{align}\label{eq:tuni}
\overline{T}_{uni}\approx \frac{1}{|\mathcal{V}_1|}\sum_{\mathbf{V}\in \mathcal{V}_1}  \sum_{\lambda=1}^{\Lambda-t} \mathbf{L}(\lambda) \frac{{\Lambda-\lambda \choose t} }{{\Lambda \choose t}}, 
 \end{align}
\BS{where $\mathbf{L} = sort(\mathbf{V})$  is the sorted (in descending order) version of the cache load vector $\mathbf{V}$.}

Figure~\ref{fig:Tavg_sbn} compares the SBN approximation from~\eqref{eq:tavgs} with the SBN approximation from~\eqref{eq:tuni} for $|\mathcal{V}_1| = 10000$, $K=400$, $\Lambda=10$, and $\mathbf{p}=[0.2, 0.2 , 0.15, 0.1,0.1, 0.05, 0.05, 0.05,$ $  0.05, 0.05]$, where $\mathcal{V}_1$ is generated based on cache population intensities vector $\mathbf{p}$. \textcolor{black}{This figure} highlights the significant gain that can be achieved by allocating the cache capacity according to the cache population intensities as our scheme significantly outperforms the uniform cache size based coded caching scheme. For the same parameter setup, 
Figure~\ref{fig:gamma} compares the cache capacity allocations of our scheme with uniform cache capacity allocations for various capacity budgets $t$. \textcolor{black}{This new figure illustrates how our scheme allocates} cache capacity in proportion with cache population intensities.
\begin{figure}[t]
\centering
 \includegraphics[width=0.99\linewidth]{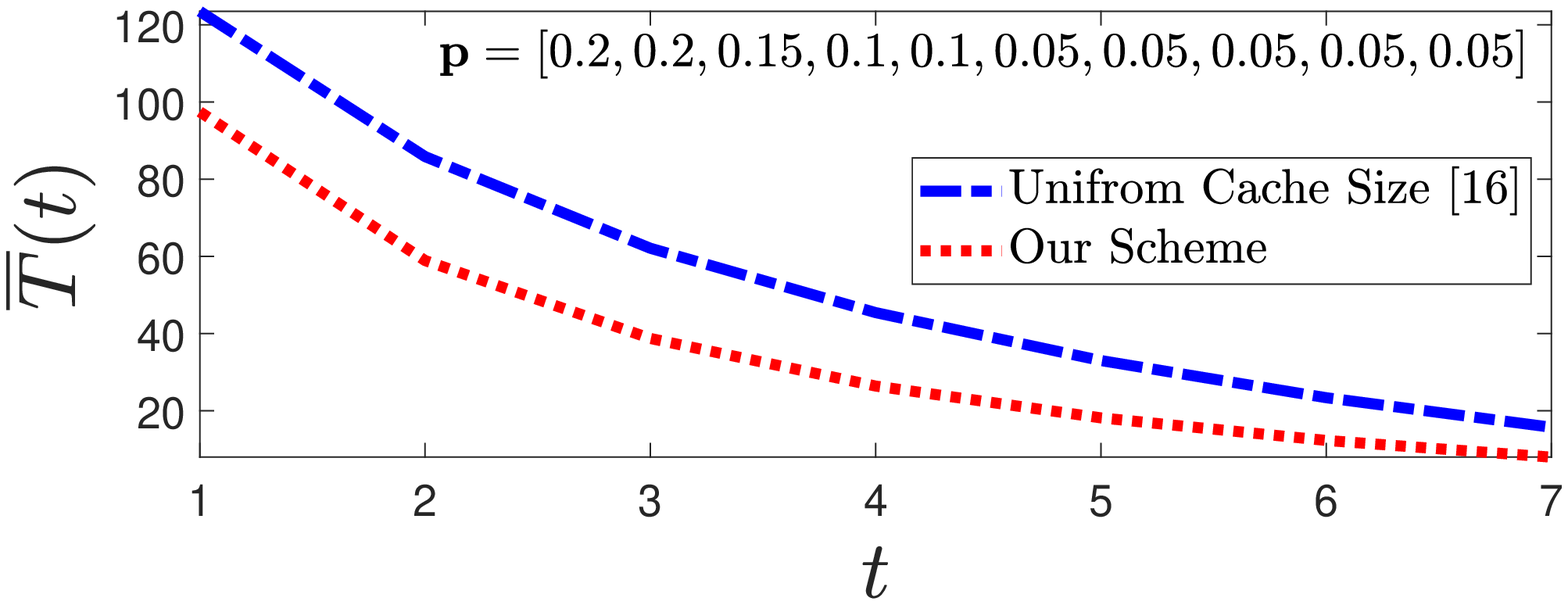}
\caption{$\overline{T}(t)$ from \eqref{eq:tavgs} vs $\overline{T}_{uni}$ from \eqref{eq:tuni}.}
\label{fig:Tavg_sbn}
%\vspace{-0.5cm}
\end{figure}

\section{Conclusion}
The work explored the coded caching problem in stochastic shared-cache networks, where each user can appear within the coverage area of one of $\Lambda$ caches with a given probability, and \textcolor{black}{in this context proposed} a scheme that optimizes the storage allocation of caches under a cumulative cache-size constraint. The novel scheme alleviates the adverse effect of cache-load-imbalance by significantly \textcolor{black}{ameliorating the detrimental} performance deterioration due to randomness. \textcolor{black}{Furthermore, for each and every instance of the coded caching problem, our scheme substantially alleviates -- compared to the best known state-of-art --- the well-known subpacketization bottleneck.}

\begin{figure}[t]
\centering
 \includegraphics[width=0.99\linewidth]{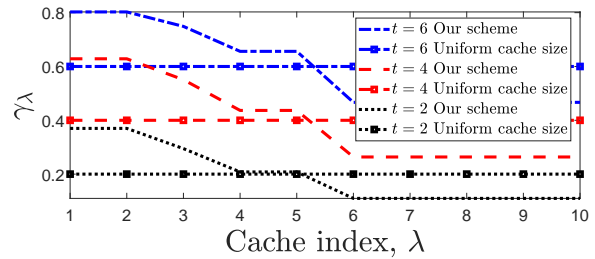}
\caption{Cache capacity allocation.}
\label{fig:gamma}
%\vspace{-0.5cm}
\end{figure}
\bibliographystyle{IEEEtran}
%\bibliography{IEEEabrv,mainJ}

%\bibliographystyle{IEEEtranS}
\bibliography{IEEEabrv,mainJ}

%\bibitem{ep} P. Gupta and P. R. Kumar, ``The capacity of wireless networks," \emph{IEEE Trans. Inf. Theory}, vol. 46, no. 2, pp. 388--404, Mar. 2000. %3

%\bibitem{R1-16} B. Liu, Z. Liu, and D. Towsley, ``On the capacity of hybrid wireless networks," in \emph{Proc. IEEE INFOCOM}, San Francisco, CA, Mar./Apr. 2003, pp. 1543--1552. %11

%\bibitem{gupta} P. Gupta and P. R. Kumar, ``The capacity of wireless networks," \emph{IEEE Trans. Inf. Theory}, vol. 46, no. 2, pp. 388--404, Mar. 2000. %3

%\bibitem{chen} B. Chen and C. Yang, ``Caching policy optimization for d2d communications by learning user preference," in \emph{Proc. IEEE VTC}, Sydney, NSW, Jun. 2017, pp. 1--6. %11
%B. Chen and C. Yang, "Caching Policy Optimization for D2D Communications by Learning User Preference," 2017 IEEE 85th Vehicular Technology Conference (VTC Spring), Sydney, NSW, 2017, pp. 1-6.

% Can use something like this to put references on a page
% by themselves when using endfloat and the captionsoff option.
\ifCLASSOPTIONcaptionsoff
  \newpage
\fi

% trigger a \newpage just before the given reference
% number - used to balance the columns on the last page
% adjust value as needed - may need to be readjusted if
% the document is modified later
%\IEEEtriggeratref{8}
% The "triggered" command can be changed if desired:
%\IEEEtriggercmd{\enlargethispage{-5in}}

% references section

% can use a bibliography generated by BibTeX as a .bbl file
% BibTeX documentation can be easily obtained at:
% http://www.ctan.org/tex-archive/biblio/bibtex/contrib/doc/
% The IEEEtran BibTeX style support page is at:
% http://www.michaelshell.org/tex/ieeetran/bibtex/
%\bibliographystyle{IEEEtranTCOM}
% argument is your BibTeX string definitions and bibliography database(s)
%\bibliography{IEEEabrv,../bib/paper}
%
% <OR> manually copy in the resultant .bbl file
% set second argument of \begin to the number of references
% (used to reserve space for the reference number labels box)
%
%\begin{thebibliography}{1}

%\bibitem{IEEEhowto:kopka}
%H.~Kopka and P.~W. Daly, \emph{A Guide to \LaTeX}, 3rd~ed.\hskip 1em plus
%  0.5em minus 0.4em\relax Harlow, England: Addison-Wesley, 1999.

%\end{thebibliography}

\end{document}